% Options for packages loaded elsewhere
\PassOptionsToPackage{unicode}{hyperref}
\PassOptionsToPackage{hyphens}{url}
\PassOptionsToPackage{dvipsnames,svgnames,x11names}{xcolor}
\documentclass[
  11pt,
  a4paper,
  DIV=11,
  numbers=noendperiod]{scrartcl}

\usepackage{amsmath,amssymb}
\usepackage{iftex}
\ifPDFTeX
  \usepackage[T1]{fontenc}
  \usepackage[utf8]{inputenc}
  \usepackage{textcomp} % provide euro and other symbols
\else % if luatex or xetex
  \usepackage{unicode-math}
  \defaultfontfeatures{Scale=MatchLowercase}
  \defaultfontfeatures[\rmfamily]{Ligatures=TeX,Scale=1}
\fi
\usepackage{lmodern}
\ifPDFTeX\else  
    % xetex/luatex font selection
\fi
% Use upquote if available, for straight quotes in verbatim environments
\IfFileExists{upquote.sty}{\usepackage{upquote}}{}
\IfFileExists{microtype.sty}{% use microtype if available
  \usepackage[]{microtype}
  \UseMicrotypeSet[protrusion]{basicmath} % disable protrusion for tt fonts
}{}
\makeatletter
\@ifundefined{KOMAClassName}{% if non-KOMA class
  \IfFileExists{parskip.sty}{%
    \usepackage{parskip}
  }{% else
    \setlength{\parindent}{0pt}
    \setlength{\parskip}{6pt plus 2pt minus 1pt}}
}{% if KOMA class
  \KOMAoptions{parskip=half}}
\makeatother
\usepackage{xcolor}
\setlength{\emergencystretch}{3em} % prevent overfull lines
\setcounter{secnumdepth}{2}

\usepackage{longtable,booktabs,array}
\usepackage{calc} % for calculating minipage widths
% Correct order of tables after \paragraph or \subparagraph
\usepackage{etoolbox}
\makeatletter
\patchcmd\longtable{\par}{\if@noskipsec\mbox{}\fi\par}{}{}
\makeatother
% Allow footnotes in longtable head/foot
\IfFileExists{footnotehyper.sty}{\usepackage{footnotehyper}}{\usepackage{footnote}}
\makesavenoteenv{longtable}
\usepackage{graphicx}
\makeatletter
\def\maxwidth{\ifdim\Gin@nat@width>\linewidth\linewidth\else\Gin@nat@width\fi}
\def\maxheight{\ifdim\Gin@nat@height>\textheight\textheight\else\Gin@nat@height\fi}
\makeatother
% Scale images if necessary, so that they will not overflow the page
% margins by default, and it is still possible to overwrite the defaults
% using explicit options in \includegraphics[width, height, ...]{}
\setkeys{Gin}{width=\maxwidth,height=\maxheight,keepaspectratio}
% Set default figure placement to htbp
\makeatletter
\def\fps@figure{htbp}
\makeatother
\newlength{\cslhangindent}
\setlength{\cslhangindent}{1.5em}
\newlength{\csllabelwidth}
\setlength{\csllabelwidth}{3em}
\newlength{\cslentryspacingunit} % times entry-spacing
\setlength{\cslentryspacingunit}{\parskip}
\newenvironment{CSLReferences}[2] % #1 hanging-ident, #2 entry spacing
 {% don't indent paragraphs
  \setlength{\parindent}{0pt}
  % turn on hanging indent if param 1 is 1
  \ifodd #1
  \let\oldpar\par
  \def\par{\hangindent=\cslhangindent\oldpar}
  \fi
  % set entry spacing
  \setlength{\parskip}{#2\cslentryspacingunit}
 }%
 {}
\usepackage{calc}

\KOMAoption{captions}{tableheading}
\makeatletter
\makeatother
\makeatletter
\makeatother
\makeatletter
\@ifpackageloaded{caption}{}{\usepackage{caption}}
\AtBeginDocument{%
\ifdefined\contentsname
  \renewcommand*\contentsname{Table of contents}
\else
  \newcommand\contentsname{Table of contents}
\fi
\ifdefined\listfigurename
  \renewcommand*\listfigurename{List of Figures}
\else
  \newcommand\listfigurename{List of Figures}
\fi
\ifdefined\listtablename
  \renewcommand*\listtablename{List of Tables}
\else
  \newcommand\listtablename{List of Tables}
\fi
\ifdefined\figurename
  \renewcommand*\figurename{Figure}
\else
  \newcommand\figurename{Figure}
\fi
\ifdefined\tablename
  \renewcommand*\tablename{Table}
\else
  \newcommand\tablename{Table}
\fi
}
\@ifpackageloaded{float}{}{\usepackage{float}}
\floatstyle{ruled}
\@ifundefined{c@chapter}{\newfloat{codelisting}{h}{lop}}{\newfloat{codelisting}{h}{lop}[chapter]}
\floatname{codelisting}{Listing}

\usepackage{amsthm}
\theoremstyle{plain}
\newtheorem{theorem}{Theorem}[section]
\theoremstyle{plain}
\newtheorem{lemma}{Lemma}[section]
\theoremstyle{definition}
\newtheorem{definition}{Definition}[section]
\theoremstyle{remark}
\AtBeginDocument{}

\makeatother
\makeatletter
\@ifpackageloaded{caption}{}{\usepackage{caption}}
\@ifpackageloaded{subcaption}{}{\usepackage{subcaption}}
\makeatother
\makeatletter
\@ifpackageloaded{tcolorbox}{}{\usepackage[skins,breakable]{tcolorbox}}
\makeatother
\makeatletter
\@ifundefined{shadecolor}{\definecolor{shadecolor}{rgb}{.97, .97, .97}}
\makeatother
\makeatletter
\makeatother
\makeatletter
\makeatother
\ifLuaTeX
\usepackage[bidi=basic]{babel}
\else
\usepackage[bidi=default]{babel}
\fi
\babelprovide[main,import]{british}
% get rid of language-specific shorthands (see #6817):

\def\languageshorthands#1{}
\ifLuaTeX
  \usepackage{selnolig}  % disable illegal ligatures
\fi
\IfFileExists{bookmark.sty}{\usepackage{bookmark}}{\usepackage{hyperref}}
\IfFileExists{xurl.sty}{\usepackage{xurl}}{} % add URL line breaks if available
\urlstyle{same} % disable monospaced font for URLs
\hypersetup{
  pdftitle={PageRank and the Bradley--Terry model},
  pdfauthor={David Antony Selby},
  pdflang={en-GB},
  pdfkeywords={PageRank, Bradley--Terry model, network
centrality, bibliometrics, rankings},
  colorlinks=true,
  linkcolor={blue},
  filecolor={Maroon},
  citecolor={Blue},
  urlcolor={Blue},
  pdfcreator={LaTeX via pandoc}}

\title{PageRank and the Bradley--Terry model\thanks{This work is based
on a chapter from the author's unpublished PhD thesis
(\protect\hyperlink{ref-selby2020}{Selby 2020}).}}
\author{\href{mailto:David_Antony.Selby@dfki.de}{David Antony Selby}}
\date{\small Data Science and its Applications\\German Research Centre for Artificial Intelligence (DFKI)\\Kaiserslautern, Germany}

\begin{document}
\maketitle
\begin{abstract}
\noindent PageRank and the Bradley--Terry model are competing approaches to ranking entities such as teams in sports tournaments or journals in citation networks. The Bradley--Terry model is a classical statistical method for ranking based on paired comparisons. The PageRank algorithm ranks nodes according to their importance in a network. Whereas Bradley--Terry scores are computed via maximum likelihood estimation, PageRanks are derived from the stationary distribution of a Markov chain. More recent work has shown maximum likelihood estimates for the Bradley--Terry model may be approximated from such a limiting distribution, an interesting connection that has been discovered and rediscovered over the decades. Here we show---through relatively simple mathematics---a connection between paired comparisons and PageRank that exploits the quasi-symmetry property of the Bradley--Terry model. This motivates a novel interpretation of Bradley--Terry scores as `scaled' PageRanks, and vice versa, with direct implications for citation-based journal ranking metrics.
\end{abstract}
\ifdefined\Shaded\renewenvironment{Shaded}{\begin{tcolorbox}[sharp corners, interior hidden, boxrule=0pt, borderline west={3pt}{0pt}{shadecolor}, enhanced, frame hidden, breakable]}{\end{tcolorbox}}\fi

\hypertarget{introduction}{%
\section{Introduction}\label{introduction}}

This paper discusses two alternative quantitative approaches for ranking
players in tournaments, or entities in social networks: the
Bradley--Terry paired comparisons model and the PageRank centrality
score. Both of these methods offer advantages over simpler metrics based
on unweighted counts of wins and losses.

By examining the theory underpinning each method---one a generalized
linear model and the other based on a Markov chain---it can be shown
that the two are closely connected. Under idealized conditions, a
modified PageRank metric---proposed, among others, by Pinski and Narin
(\protect\hyperlink{ref-pinski1976}{1976}) but seemingly overlooked in
recent years---yields rankings exactly equal to those from the
Bradley--Terry model. We present a novel proof for this using a
quasi-symmetry representation, and use the delta method to demonstrate
some special cases where a PageRank-based score is an asymptotically
efficient estimator for the Bradley--Terry model.

\hypertarget{network-rankings}{%
\section{Network rankings}\label{network-rankings}}

\hypertarget{pagerank}{%
\subsection{PageRank}\label{pagerank}}

PageRank, named for Larry Page, was developed in the 1990s by Google for
their search engine (\protect\hyperlink{ref-page1999}{Page et al.
1999}). More recently this approach has been applied to citation
networks in the form of the Eigenfactor and SCImago Journal Rank metrics
(\protect\hyperlink{ref-bergstrom2007}{Bergstrom 2007};
\protect\hyperlink{ref-falagas2008}{Falagas et al. 2008}).

The Markov chain corresponding to PageRank has a simple analogy, of a
random walk around the graph. Consider an imaginary PhD student, who
opens a random journal from the library. The student selects at random a
reference from within that journal and proceeds to read the cited
journal. Then a third journal is selected from the references of the
second, a fourth from the third, and so on. The proportion of overall
time spent reading a particular journal may be seen as a measure of that
journal's importance (\protect\hyperlink{ref-bergstrom2007}{Bergstrom
2007}).

Google's innovation\footnote{Strictly speaking, the damping factor was
  introduced as part of Katz centrality
  (\protect\hyperlink{ref-katz1953}{1953}). See Vigna
  (\protect\hyperlink{ref-vigna2016}{2016}).} was the addition of a
damping factor, \(\alpha\). At any time, with probability \(1-\alpha\),
our PhD student gets bored, returns their current journal to the shelf
and selects a new journal at random from the library. In a random walk
this would be the ability to `teleport' randomly from one node to
another. This helps link up disconnected components of sparse graphs,
making the process ergodic. PageRank without teleportation
(\(\alpha=1\)) is called \emph{undamped}.

Let the matrix \(C\) describe the results of a tournament, where
\(c_{ij}\) is the number of times player \(i\) beats \(j\) (in a
directed network, the number or weight of links from node \(j\) to
\(i\)) for \(i,j=1,\dots,n\). Let \(A\) be a diagonal matrix equal to
the column sums of \(C\). Then PageRank is the stationary distribution
of the Markov chain with probability transition matrix

\[
P_\alpha = \alpha C A^{-1} + \frac{1-\alpha}n e e^T,
\]

where \(e\) is an \(n\)-vector of ones. We can compute PageRank from the
leading right eigenvector, \(\pi = P\pi\). Undamped PageRank is the
leading eigenvector of \(P_1 = CA^{-1}\).

\hypertarget{the-bradleyterry-model}{%
\subsection{The Bradley--Terry model}\label{the-bradleyterry-model}}

An alternative approach for ranking players in a tournament is a paired
comparisons model (\protect\hyperlink{ref-bradley1952}{Bradley and Terry
1952}). Assume that in every pairing, \(i\) defeats \(j\) or vice versa.
The log odds are given by

\begin{equation}\protect\hypertarget{eq-bt}{}{ \operatorname{log\,odds} (i~\text{beats}~j \mid i~\text{and}~j~\text{compete}) = \mu_i - \mu_j }\label{eq-bt}\end{equation}

where \(\mu_i\) and \(\mu_j\) are the respective \emph{ability scores}
of players \(i\) and \(j\). Players can be ranked on a linear scale
according to their scores, with victories against higher-ability players
contributing more against `easy' opponents. Extensions of the
Bradley--Terry model allow for tied games, `home advantage' and other
contextual effects, with estimates for the scores computed via maximum
likelihood estimation using standard statistical software
(\protect\hyperlink{ref-turner2012}{Turner and Firth 2012}).

Like PageRank, the Bradley--Terry model has been applied to citation
networks: Stigler (\protect\hyperlink{ref-stigler1994}{1994}) proposed a
model that measured `export scores' for academic journals. The analogy
corresponds to bilateral trade; intellectual influence being `exported'
(i.e.~citations received) and `imported' (citations given) among trading
partners (journals). Larger export scores imply greater influence.
According to Stigler, desirable properties for a measure of influence
include insensitivity to self-links (not applicable to football matches,
but an important consideration in journal rankings), absence of size
bias, demonstrated through invariance to node aggregation (merging of
journals).

One problem with such a model, however, is that it assumes that outcomes
of paired comparisons are independent, which can lead to overdispersion.
Varin, Cattelan, and Firth (\protect\hyperlink{ref-varin2016}{2016})
describe the use of quasi-likelihood estimation to fit a `quasi-Stigler'
model with an additional parameter of dispersion.

While PageRank and the Bradley--Terry model are both based on
well-defined stochastic principles, PageRank is not strictly a `model'
in the statistical sense, but rather a unique characteristic of a
matrix: an exact measure of the `centrality' of the vertices in a graph.
There is no `ground truth' against which to compare the computed score
vector. Contrastingly, the paired comparisons model allows for
uncertainty quantification, which can be conveniently visualized using
\emph{quasi-variances} (\protect\hyperlink{ref-firth2004}{Firth 2004})
and investigation of lack of fit through analysis of \emph{journal
residuals} (\protect\hyperlink{ref-varin2016}{Varin, Cattelan, and Firth
2016}).

\hypertarget{bibliometric-connections}{%
\section{Bibliometric connections}\label{bibliometric-connections}}

\hypertarget{pagerank-precursors}{%
\subsection{PageRank precursors}\label{pagerank-precursors}}

As Franceschet (\protect\hyperlink{ref-franceschet2011}{2011}) and Vigna
(\protect\hyperlink{ref-vigna2016}{2016}) note, the ideas behind
PageRank---and eigenvector centrality or `spectral ranking' more
generally---stem from work originating much earlier in the twentieth
century. One of these earlier proposed applications for such an score
was in ranking scholarly journals, and thus the Eigenfactor metrics
(\protect\hyperlink{ref-bergstrom2007}{Bergstrom 2007}) might be
considered a \emph{re}-adaptation of eigenvector centrality to
bibliometrics following its relatively recent adoption for search
engines. The notion of a damping factor can also be attributed to Katz
(\protect\hyperlink{ref-katz1953}{1953}).

Pinski and Narin (\protect\hyperlink{ref-pinski1976}{1976}) proposed
three related measures for citation influence: \emph{total influence},
\emph{influence weight} and \emph{influence per publication}. A citation
network, like the Web, is a directed, weighted graph representing nodes
(journals or web pages) and the asymmetric relationships between them,
so it is natural that metrics proposed for ranking web sites may also be
applicable to ranking academic publications.

Influence weight (or influence \emph{per outgoing reference}), is
defined by the recursive equation

\begin{equation}\protect\hypertarget{eq-infwt}{}{ w_i = \frac{\sum_{j=1}^n w_j c_{ij}}{\sum_{j=1}^n c_{ji}} }\label{eq-infwt}\end{equation}

for \(i = 1,\dots,n\). Multiplying Equation~\ref{eq-infwt} by the
out-degree of node \(i\) (the size of that journal's bibliography),
\(c_{\cdot i} = \sum_j c_{ji}\), yields the definition of total
influence

\begin{equation}\protect\hypertarget{eq-totinf}{}{ w_i^\text{total} = w_i c_{\cdot i} = \sum_{j=1}^n w_j c_{ij}, }\label{eq-totinf}\end{equation}

and dividing this in turn by \(a_i\), the number of articles published
in journal \(i\), we get the (total) influence per publication

\begin{equation}\protect\hypertarget{eq-infpub}{}{ w_i^\text{pub} = \frac{w_i}{a_i} c_{\cdot i}. }\label{eq-infpub}\end{equation}

Expanding Equation~\ref{eq-totinf}, for every \(i=1,\dots,n\) we have

\[ w_i^\text{total} = \sum_{j=1}^n \frac{c_{ij}}{c_{\cdot j}} w_j^\text{total}, \]

which can be rewritten in matrix notation as
\(w^\text{total} = C A^{-1} {w}^\text{total},\) showing that total
influence is equivalent to undamped PageRank. Indeed, Geller
(\protect\hyperlink{ref-geller1978}{1978}) showed that total influence
(Equation~\ref{eq-totinf}) is the stationary distribution of a Markov
chain. The connections between PageRank and the methods of Pinski and
Narin (\protect\hyperlink{ref-pinski1976}{1976}) as well as similar
independently-proposed metrics, are highlighted in Franceschet
(\protect\hyperlink{ref-franceschet2011}{2011}).

Similarly, influence per publication (Equation~\ref{eq-infpub})
corresponds to an undamped version of Article Influence score, which is
Eigenfactor per publication. One of the motivations for such an
adaptation is to account for a journal size bias, wherein larger
periodicals, containing more articles, accrue more citations simply by
virtue of having more possible things to cite. This does not, however,
solve the problem of review journals, whose articles are longer, thus
also presenting more content as a `target' for citations
(\protect\hyperlink{ref-west2010}{West 2010}).

Palacios-Huerta and Volij
(\protect\hyperlink{ref-palacios-huerta2002}{2002}) named five axioms:
anonymity, weak homogeneity, weak consistency, invariance to citation
intensity and invariance to splitting of journals, arriving at an
approach they describe as the \emph{invariant method}, which is similar
to Pinski and Narin's influence per publication metric, but controls for
citation intensity by dividing by the average bibliography length of
each article. We note these desirable properties closely correlate with
those highlighted by Stigler (\protect\hyperlink{ref-stigler1994}{1994})
for his export scores model.

\hypertarget{theoretical-connection}{%
\section{Theoretical connection}\label{theoretical-connection}}

\hypertarget{quasi-symmetry}{%
\subsection{Quasi-symmetry}\label{quasi-symmetry}}

In this section we introduce the concept of quasi-symmetry, which we can
show underpins the connection between PageRank and the Bradley--Terry
model. Quasi-symmetry was originally defined by Caussinus
(\protect\hyperlink{ref-caussinus1965}{1965}) and is a generalization of
matrix symmetry.

\begin{definition}[]\protect\hypertarget{def-quasisymmetry}{}\label{def-quasisymmetry}

A square matrix \(Q=(q_{ij})_{n \times n}\) is called
\emph{quasi-symmetric} if its elements can be expressed in the form
\(q_{ij} = a_i b_{j} x_{ij}\), where \(x_{ij} = x_{ji}\) for
\(i,j=1,\dots,n\). In matrix notation, this decomposition is written
\(Q = AXB\), where \(A\) and \(B\) are diagonal matrices and \(X\) is
symmetric (\protect\hyperlink{ref-caussinus1965}{Caussinus 1965}).

\end{definition}

\begin{lemma}[]\protect\hypertarget{lem-quasisymmetry}{}\label{lem-quasisymmetry}

A (strictly positive) matrix \(Q\) is quasi-symmetric if and only if it
can be expressed as the product of a diagonal and a symmetric matrix
(\protect\hyperlink{ref-sharp2000}{Sharp and Markham 2000}).

\end{lemma}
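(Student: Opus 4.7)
The plan is to prove both implications directly by algebraic manipulation of the decomposition, using the strict positivity assumption only to guarantee that the diagonal factors have nonzero entries.

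For the ``only if'' direction, I start from the definition: suppose $Q = AXB$ with $A,B$ diagonal and $X$ symmetric. Since $Q$ is strictly positive, none of its entries vanish, so $q_{ij} = a_i b_j x_{ij}$ forces $a_i \neq 0$ and $b_j \neq 0$ for every $i,j$, making $B$ invertible. I then rewrite
\[
Q = AXB = (AB^{-1})(BXB),
\]
which is valid because diagonal matrices commute. The first factor $D := AB^{-1}$ is diagonal by construction, and the second factor $S := BXB$ is symmetric, since $(BXB)^\top = B^\top X^\top B^\top = BXB$ using $B^\top = B$ and $X^\top = X$. Thus $Q = DS$ as required.

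For the ``if'' direction, suppose $Q = DS$ with $D$ diagonal and $S$ symmetric. This already matches the quasi-symmetric form with the trivial choice $A = D$, $X = S$ and $B = I$: both $A$ and $B$ are diagonal and $X$ is symmetric, so $q_{ij} = a_i b_j x_{ij}$ with $b_j = 1$ for all $j$. Hence $Q$ is quasi-symmetric in the sense of Definition~\ref{def-quasisymmetry}.

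There is no substantial obstacle here; the whole argument rests on the fact that conjugation of a symmetric matrix by a diagonal matrix preserves symmetry, and that diagonal matrices commute. The only place where the strict positivity hypothesis is actually used is to guarantee invertibility of $B$ in the forward direction, so that $AB^{-1}$ makes sense. If one wished to drop strict positivity, one could instead characterise quasi-symmetry via the entrywise condition $q_{ij}/q_{ji} = d_i/d_j$ on the support of $Q$, but that refinement is not needed for the statement as given.
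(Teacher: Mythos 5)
Your proof is correct and follows essentially the same route as the paper's: the key step $Q = AXB = (AB^{-1})(BXB) = DS$ is exactly the paper's argument. You are in fact slightly more complete, since you also spell out the trivial converse ($A=D$, $B=I$, $X=S$) and note explicitly that strict positivity is what makes $B$ invertible, both of which the paper leaves implicit (the paper additionally records the reversed factorisation $Q = (AXA)(A^{-1}B) = S'D'$, which your argument omits but the statement does not require).
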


\begin{proof}

\[ Q = AXB = AB^{-1}BXB = (AB^{-1})(BXB) = DS, \] where \(D\) is
diagonal and \(S\) is symmetric. Similarly,
\[ Q = AXB = AX(AA^{-1})B = (AXA) (A^{-1}B) = S'D' \] where \(D'\) is
diagonal and \(S'\) symmetric.

\end{proof}

The quasi-symmetry model is a log-linear model for square contingency
tables wherein the expected counts are quasi-symmetric. The
Bradley--Terry model is a logistic formulation of the quasi-symmetry
model (\protect\hyperlink{ref-agresti2013}{Agresti 2013, chap. 10}).
Thus if a results matrix \(C\) is exactly quasi-symmetric, then the
Bradley--Terry model would fit perfectly, with ability scores given by
(the logarithm of) the diagonal elements of \(D\).

\begin{theorem}[]\protect\hypertarget{thm-main}{}\label{thm-main}

Let \(Q=DS\) be a quasi-symmetric matrix with \(D\) diagonal and \(S\)
symmetric. Let \(e\) denote a vector of ones. Let
\(A=\operatorname{diag}(e^TQ)\) be a diagonal matrix with elements equal
to the column sums of \(Q\) and let \(d\) be the vector corresponding to
the diagonal elements of \(D\). Then \(d\) is the leading eigenvector of
\(A^{-1}Q\).

\end{theorem}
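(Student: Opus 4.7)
The plan is to verify the eigenvector identity $A^{-1}Q d = d$ by direct computation, and then promote this to the \emph{leading} eigenvector via the Perron--Frobenius theorem.

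The natural starting point is to translate the definition of $A$ into a statement about $D$ and $S$, which are the objects I actually control. Since $e^T D = d^T$ and $S$ is symmetric, the column-sum vector simplifies to $e^T Q = d^T S = (Sd)^T$. Hence $A = \operatorname{diag}(Sd)$, and in particular $Sd = Ae$, because the $i$-th entry of $Sd$ is by construction the $i$-th diagonal entry of $A$.

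With that identity in hand, the eigenvector calculation is essentially a one-liner. Substituting $Q = DS$ and $Sd = Ae$ gives $Qd = DSd = DAe$; using the fact that the diagonal matrices $D$ and $A$ commute and that $De = d$, this simplifies to $DAe = ADe = Ad$. Multiplying on the left by $A^{-1}$ yields $A^{-1}Q d = d$, exhibiting $d$ as an eigenvector of $A^{-1}Q$ with eigenvalue $1$.

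The remaining step, which I would expect to require the most care, is arguing that $1$ is the \emph{largest} eigenvalue in modulus, so that $d$ is genuinely the leading eigenvector rather than merely an eigenvector. For this I would exploit the similarity $Q A^{-1} = A (A^{-1} Q) A^{-1}$, which shows that $A^{-1} Q$ and $Q A^{-1}$ share a spectrum. The matrix $Q A^{-1}$ is column-stochastic, since its $j$-th column sums to $A_{jj}/A_{jj} = 1$, so its spectral radius equals $1$. Under the strictly-positive assumption on $Q$ invoked in Lemma~\ref{lem-quasisymmetry} (and natural in the paired-comparisons setting), Perron--Frobenius guarantees a \emph{unique} positive Perron eigenvector of $A^{-1} Q$ up to scaling; since $d$ is positive and has eigenvalue equal to the spectral radius, it must be that Perron eigenvector, completing the proof.
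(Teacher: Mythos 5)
Your proof is correct and follows essentially the same route as the paper: the same direct computation showing $A^{-1}Qd = d$ (your identity $Sd = Ae$ is exactly the paper's step $SDe = (e^TDS)^T = Ae$), followed by the same similarity argument with the column-stochastic matrix $QA^{-1}$ to identify $1$ as the leading eigenvalue. If anything you are slightly more careful than the paper in invoking Perron--Frobenius with the positivity assumption to conclude that $d$ is \emph{the} leading eigenvector rather than merely an eigenvector with eigenvalue equal to the spectral radius.
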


This theorem implies that, under quasi-symmetry, the Bradley--Terry
model is equivalent to a scaled version of PageRank. Proving the theorem
requires a few intermediate results.

\begin{lemma}[]\protect\hypertarget{lem-scrooge-eval}{}\label{lem-scrooge-eval}

Let \(C\) be a square matrix and let \(A\) be the diagonal matrix of its
column sums. Then the scaled matrix \(A^{-1}C\) has largest eigenvalue
equal to 1.

\end{lemma}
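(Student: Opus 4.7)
The plan is to exploit the similarity between $A^{-1}C$ and the more familiar column-stochastic matrix $CA^{-1}$ that underpins PageRank. Observe that, since $A$ is invertible (we tacitly assume $C$ has no zero columns, which is the standing assumption throughout the paper), conjugation by $A$ gives
\[ A\,(A^{-1}C)\,A^{-1} \;=\; C A^{-1}, \]
so $A^{-1}C$ and $CA^{-1}$ have identical spectra. It therefore suffices to prove that the largest eigenvalue of $CA^{-1}$ is $1$.

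Next I would verify that $CA^{-1}$ is column-stochastic. The $j$-th column sum is $\sum_i C_{ij}/A_{jj}$, and by the definition $A_{jj} = \sum_i C_{ij}$ this equals $1$; equivalently $e^T (CA^{-1}) = e^T$, so $e$ is a left eigenvector with eigenvalue~$1$. (If one prefers to work with $A^{-1}C$ directly, the analogous observation is that $Ae$ — the vector of column sums of $C$ — is a left eigenvector of $A^{-1}C$ with eigenvalue $1$, since $(Ae)^T A^{-1}C = e^T C = (Ae)^T$.)

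Finally, I would invoke the standard fact that a (nonnegative) column-stochastic matrix has spectral radius $1$. The quick justification is Gershgorin applied to $(CA^{-1})^T$: every row sum of the transpose equals $1$, so all eigenvalues of $(CA^{-1})^T$ — and hence of $CA^{-1}$ — lie in the closed unit disc. Combined with the fact that $1$ is already an eigenvalue, this gives spectral radius exactly $1$, and by similarity the same holds for $A^{-1}C$.

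The proof is essentially mechanical and has no real obstacle; the only point requiring care is the implicit nonnegativity/invertibility assumption on $C$ (so that $A^{-1}$ exists and Perron–Frobenius-type reasoning applies), which is natural in the context of a results or citation matrix but should be stated for precision.
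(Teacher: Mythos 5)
Your proposal is correct and follows essentially the same route as the paper: establish that $A^{-1}C$ is similar to the column-stochastic matrix $CA^{-1}$ via conjugation by $A$, then invoke the standard fact that stochastic matrices have leading eigenvalue $1$. The only difference is that you spell out the stochasticity check and the Gershgorin bound where the paper simply cites Meyer (2000), and you rightly flag the implicit assumption that $C$ has no zero columns so that $A^{-1}$ exists.
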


Stochastic matrices, such as \(CA^{-1}\), always have leading eigenvalue
equal to 1 (\protect\hyperlink{ref-meyer2000}{Meyer 2000}), with
multiplicity 1 if irreducible (by the Perron--Frobenius theorem, see
\protect\hyperlink{ref-palacios-huerta2002}{Palacios-Huerta and Volij
2002}). While the scaled matrix \(A^{-1}C\) is not itself stochastic, it
is \emph{similar} to \(CA^{-1}\).

\begin{lemma}[]\protect\hypertarget{lem-similarity}{}\label{lem-similarity}

Two matrices \(M\) and \(M'\) are called \emph{similar} if there exists
an invertible matrix \(X\) such that \(M' = X^{-1}MX\). Similar matrices
have the same set of eigenvalues
(\protect\hyperlink{ref-newman2010}{Newman 2010, 138}).

\end{lemma}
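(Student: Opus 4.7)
The plan is to show that $M$ and $M'$ share the same characteristic polynomial, from which equality of their spectra (in fact with multiplicities) follows at once.

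The first step is to substitute $M' = X^{-1} M X$ into $M' - \lambda I$ and to rewrite $\lambda I = X^{-1}(\lambda I) X$, yielding
\[ M' - \lambda I = X^{-1}(M - \lambda I) X. \]
Taking determinants and using multiplicativity together with $\det(X^{-1}) \det(X) = 1$ then gives $\det(M' - \lambda I) = \det(M - \lambda I)$ for every scalar $\lambda$, so the two characteristic polynomials coincide and the spectra agree.

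An equivalent and arguably more useful route avoids determinants altogether and works directly with eigenvectors: if $Mv = \lambda v$ with $v \neq 0$, then $M'(X^{-1} v) = X^{-1} M X X^{-1} v = \lambda (X^{-1} v)$, so $X^{-1} v$ is an eigenvector of $M'$ with the same eigenvalue $\lambda$. The converse inclusion holds by symmetry of the similarity relation, since $M = X M' X^{-1} = (X^{-1})^{-1} M' (X^{-1})$. I would probably present this variant, since the application in Lemma~\ref{lem-scrooge-eval} ultimately requires identifying eigenvectors of $A^{-1} C$ with those of the stochastic matrix $C A^{-1}$, and the bijection $v \mapsto X^{-1} v$ supplies exactly that correspondence for free.

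There is no real obstacle here: the result is an elementary textbook fact, which is why the authors merely cite Newman. The only subtlety worth flagging is that the similarity relation must be invoked in both directions to deliver equality of spectra rather than a one-sided inclusion, and this is automatic once one notices that similarity is a symmetric relation.
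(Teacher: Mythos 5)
Your proposal is correct, and the variant you say you would actually present---showing $M'(X^{-1}v) = \lambda(X^{-1}v)$ by conjugation---is precisely the argument the paper gives, down to the observation that the map $v \mapsto X^{-1}v$ carries eigenvectors across (which the paper exploits to relate influence weight to PageRank). Your remark that the relation must be invoked in both directions to get equality of spectra, rather than one inclusion, is a point the paper's proof leaves implicit, so flagging it is a small improvement rather than a divergence.
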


This is a standard result, but a proof is included for completeness.

\begin{proof}

We have \(M' = X^{-1}MX\) and \(M=XM'X^{-1}\). So

\[
Mv = \lambda v = XM'X^{-1}v \implies M' X^{-1}v = \lambda X^{-1} v.
\]

Hence, if \(\lambda\) is an eigenvalue of \(M\) with eigenvector \(v\),
then \(\lambda\) is also an eigenvalue of \(M'\) with eigenvector
\(X^{-1}v\).

\end{proof}

Now we prove Lemma~\ref{lem-scrooge-eval}.

\begin{proof}

Clearly, \(CA^{-1}\) and \(A^{-1}C\) are similar, because
\(A^{-1}(CA^{-1})A = A^{-1}C,\) so they have the same eigenvalues.
Hence, the leading eigenvalue of \(A^{-1}C\) is equal to 1
(\protect\hyperlink{ref-geller1978}{Geller 1978}).

\end{proof}

Using these existing results, we can now prove Theorem~\ref{thm-main}.

\begin{proof}

\[ \begin{aligned} A^{-1}Cd &= A^{-1} (DS)(De) \\ &= A^{-1}D(e^TDS)^T \\ &= DA^{-1}Ae \\ &= De \\ &= d. \end{aligned} \]
Thus \(d\) is an eigenvector of \(A^{-1}C\) corresponding to eigenvalue
1. By Lemma~\ref{lem-scrooge-eval}, this is the leading eigenvector.

\end{proof}

The proof of Lemma~\ref{lem-similarity} highlights another useful
result: eigenvectors of similar matrices are linear transformations of
one another. Undamped PageRank is the leading eigenvector
\(\pi = CA^{-1}\pi\); influence weight, \(d\) is the leading eigenvector
of the similar but `wrongly scaled' non-stochastic matrix \(A^{-1}C\).

It follows that \(d = A^{-1}\pi\), so we get two ranking metrics for the
price of one eigenvector computation. In other words, we can calculate
Pinski and Narin's influence weight from pre-computed PageRank (and
\emph{vice versa}) even if we do not have access to the original counts
matrix \(C\) but know the vector of marginal out-degrees, harking back
to the definition of influence weight per reference
(Equation~\ref{eq-infwt}).

\hypertarget{reversibility}{%
\subsection{Reversibility}\label{reversibility}}

From the definition of quasi-symmetry it is easy to show that a
quasi-symmetric matrix satisfies the property

\[
    c_{ij} c_{jk} c_{ki} = c_{ji} c_{kj} c_{ik},
\] for each triplet \(i,j,k=1,\dots,n\)
(\protect\hyperlink{ref-caussinus1965}{Caussinus 1965};
\protect\hyperlink{ref-sharp2000}{Sharp and Markham 2000}).

PageRank, meanwhile, represents the stationary distribution of a
(discrete-time) Markov chain. Given a (column-stochastic) transition
matrix \(P\), there exists a stationary distribution
\(\pi = (\pi_i,\dots,\pi_n)\) if \emph{global balance}, \[
    \pi_i \sum_{j=1}^n p_{ji} = \sum_{j=1}^n \pi_j p_{ij},
\] is satisfied for all \(i=1,\dots,n\). The \emph{detailed balance}
equations (also known as local balance),
\begin{equation}\protect\hypertarget{eq-detailed}{}{
    \pi_i p_{ji} = \pi_j p_{ji},
}\label{eq-detailed}\end{equation} hold for all pairs \(i,j=1,\dots,n\)
if and only if the Markov chain is reversible. Kolmogorov's criterion,
which is easy to verify in Equation~\ref{eq-detailed}, is that \[
    p_{ij}p_{jk}p_{ki} = p_{ji}p_{kj}p_{ik}
\] for every triplet \((i,j,k)\) if and only if the Markov chain is
reversible (\protect\hyperlink{ref-kelly1979}{Kelly 1979, chap. 1}). It
follows that a Markov chain is reversible if and only if its probability
transition matrix is quasi-symmetric
(\protect\hyperlink{ref-mccullagh1982}{McCullagh 1982};
\protect\hyperlink{ref-bof2017}{Bof, Baggio, and Zampieri 2017}). The
probability transition matrix for PageRank of a quasi-symmetric
tournament is itself quasi-symmetric, because

\[
CA^{-1} = DSA^{-1} = AA^{-1}DSA^{-1} = AD(A^{-1}SA^{-1})
\]

by commutativity of diagonal matrices. Thus, if (and only if)
Bradley--Terry scores are equal to influence weights, then the
corresponding Markov chain is reversible.

In the linear algebra literature, quasi-symmetric matrices are called
\emph{symmetrizable} and the matrix \(D^{-1}\) is called the
\emph{symmetrizer} (\protect\hyperlink{ref-dias2016}{Dias, Castonguay,
and Dourado 2016}). Symmetrizability, reversibility and quasi-symmetry
are all equivalent.

Unfortunately, a Markov chain with teleportation is not reversible, and
therefore the connection breaks down when \(\alpha < 1\). The
conventional choice of \(\alpha = 0.85\) is somewhat arbitrary
(\protect\hyperlink{ref-newman2010}{Newman 2010, sec. 7.4}) but one
might nonetheless seek the `nearest' reversible Markov chain to an
irreversible one; see Avrachenkov, Ribeiro, and Towsley
(\protect\hyperlink{ref-avrachenkov2010}{2010}), Nielsen and Weber
(\protect\hyperlink{ref-nielsen2015}{2015}) or Tahata
(\protect\hyperlink{ref-tahata2022}{2022}).

\hypertarget{sec-asymptotic}{%
\subsection{Asymptotic efficiency}\label{sec-asymptotic}}

For certain special structures, it can be shown via the delta method
that the (log) influence weight is an asymptotically efficient estimator
for Bradley--Terry scores.

Consider a `round robin' network (tournament) in which every entity
cites (beats) every other entity an equal number of times, \(k\). Then
the citation matrix is \({C} = k {e e}^T\). The corresponding
column-stochastic probability transition matrix is then
\({P} = {C A}^{-1} = \frac1n {e e}^T\).

We model perturbations of this arrangement by
\({C}_t = {C} + t {F}_{ij},\) where \(t\) is a parameter for the
magnitude of perturbation and \({F}_{ij}\) is the \(n \times n\) matrix
with element \((i, j)\) equal to \(1\), element \((j, i)\) equal to
\(-1\) and all other elements equal to zero. When \(t = 0\) then
\({C}_t = {C}_0 = {C}\), so the citation matrix is unperturbed.

The derivative of \({P}\) with respect to \(t\), for a perturbation of
element \((i,j)\), is

\[\frac {\partial{   P}} {\partial t}=\begin{bmatrix}{   0} & {   0} & \cdots &\underbrace {\frac1{n^2}{   e} -\frac1n{   e}_j} _ {\text{column } i}& \cdots &\underbrace {-\frac1{n^2}{   e} + \frac1n{   e}_i} _ {\text{column } j}& \cdots & {   0}\end{bmatrix},\]where
\({ 0}\) is an \(n\)-vector of zeros and \({ e}_i\) is the \(n\)-vector
whose \(i\)th component is \textbackslash(1\textbackslash) with all
other components zero. In other words, the \(i\)th column of
\(\partial{ P}/\partial t\) is all \(\frac1{n^2}\), except element
\((j,i)\), which is equal to \(-\frac{n-1}{n^2}\). The \(j\)th column is
all \(-\frac1{n^2}\), except element \((i,j)\), which is equal to
\(\frac{n-1}{n^2}\). Every other column is filled with zeros.

The partial derivative of an eigenvector is
\(\dot{ \pi}_0 = \pi_0 \dot{{ P}}^T ( {I-P})^\dagger,\) where a dot
denotes partial differentiation with respect to \(t\) and \(^\dagger\)
denotes the Moore--Penrose pseudoinverse
(\protect\hyperlink{ref-golub1986}{Golub and Meyer, Jr. 1986}). Putting
it all together,

\[
\dot{ \pi}_0=\begin{bmatrix}0 &\cdots &\underbrace{\frac1{kn^2}} _ {\text{element } i} &\cdots &\underbrace{\frac{-1}{kn^2}} _ {\text{element } j} &\cdots &0\end{bmatrix}^T.
\]

By the product rule, the derivative of the unnormalized influence weight
is
\(\dot{\operatorname{IW}}_\text{un} = \dot{ D}^{-1} \pi + { D}^{-1} \dot{ \pi}\).
With normalization,
\(\operatorname{IW}_\text{norm} = \frac{{ D}^{-1} \pi}{{ e}^T { D}^{-1} \pi}\).

By the quotient rule

\[
\dot{\operatorname{IW}}_\text{norm}=\frac{\psi \dot{\operatorname{IW}}_\text{un} - \operatorname{IW}_\text{un} \dot\psi}{\psi^2},
\]

where \(\psi = {e}^T \operatorname{IW}_\text{un} = \frac1{kn}\) denotes
the sum of the unnormalized influence weight vector. (Its derivative is
zero.) This yields

\[
\dot{\operatorname{IW}}_\text{norm}(0)=\begin{bmatrix}0 &\cdots &\underbrace{\frac2{kn^2}} _ {\text{element } i} &\cdots &\underbrace{\frac{-2}{kn^2}} _ {\text{element } j} &\cdots &0\end{bmatrix}^T.
\]

The derivative of the log-influence weight, by the chain rule, is

\[
\begin{aligned}\frac{\partial}{\partial t}{\log\operatorname{IW}}(0)&=\frac{\dot{\operatorname{IW}}_\text{norm}}{\operatorname{IW}_\text{norm}} =n\dot{\operatorname{IW}}_\text{norm} \\&=\begin{bmatrix}0 &\cdots &\underbrace{\frac2{kn}} _ {\text{element } i} &\cdots &\underbrace{\frac{-2}{kn}} _ {\text{element } j} &\cdots &0\end{bmatrix}^T.\end{aligned}
\]

Extending this scenario from scalar perturbations, \(t\), to every
possible combination of perturbations of the upper triangle\footnote{N.B.
  any perturbation of the lower triangle of \({ C}\) is equivalent to a
  perturbation to the upper triangle with opposite sign.} of \({ C}\),
we introduce the \(\binom{n}{2}\)-length perturbation vector \({ t}\)
and calculate the Jacobian with respect to the same. The Jacobian of
\(\log\operatorname{IW}\) is an \(n \times \binom{n}{2}\) matrix

\[
{   J} = \begin{bmatrix}+ & + & + & 0 & 0 & - & \cdots & \\- & 0 & 0 & + & + & + & \cdots & \\0 & - & 0 & - & 0 & 0 & \cdots & \\0 & 0 & - & 0 & - & 0 & \cdots &\end{bmatrix},
\]

where `\(+\)' and `\(-\)' represent the positive and negative elements
of the partial derivative
\(\frac{\partial}{\partial t}{\log\operatorname{IW}}\) (that is,
\(\pm\frac2{nk}\)) and where each column corresponds to a perturbation
of the upper triangle of \({ C}\).

If we assume the data are generated from independent binomials (with
\(2k\) trials for each pairing and success probability 1/2), the
covariance matrix \(\Sigma\) is an \(\binom{n}{2}\times\binom{n}{2}\)
diagonal matrix with every diagonal element equal to
\(2k \times \frac12 \times \frac12 = \frac{k}2\).

Applying the delta method, the resulting first-order approximate
covariance matrix for the log-influence weight is
\(\Sigma_{\operatorname{IW}}={J} \Sigma{J}^T\), with elements

\[
\left[ \Sigma_{\operatorname{IW}}\right]_{ij} =\begin{cases}\dfrac{2(n-1)}{kn^2} & i=j, \\[1em]\dfrac{-2}{kn^2}    & i\neq j,\end{cases}
\]

for \(i,j=1,2,\dots,n\).

This is exactly equal to the asymptotic covariance matrix of log-ability
scores from a Bradley--Terry (Stigler) model fitted to the same dataset.
Hence for an equal-abilities, round-robin tournament, the log-influence
weight is an asymptotically efficient estimator for the Bradley--Terry
model.

Now consider a different tournament structure where players hold hands
in a circle. The corresponding citation matrix has a \emph{cycle} or
\emph{circumplex} structure: that is, a band with non-zero entries on
the sub-diagonal and super-diagonal and in the top-right and bottom-left
corners. This might also be described as a \emph{circulant} matrix,
generated by the vector

\[
{   c} =\begin{pmatrix}0 & k & 0 & 0 & \cdots & 0 & k\end{pmatrix}^T
\]

in the first column. The citation matrix for an \(n\)-player circular
tournament, where every player cites each neighbour \(k\) times, would
look like

\[
{   C} = \begin{bmatrix}  & k &        &        &   & k \\k &   &    k &        &   &   \\  & k &          & \ddots &   &   \\  &   & \ddots   &        & k &   \\  &   &          &   k    &   & k \\k &   &          &        & k &  \end{bmatrix}. 
\]

The corresponding probability transition matrix is the same structure
with every \(k\) replaced by \(1/2\), and clearly the unperturbed
PageRank and influence weight vectors are both equal to \(\frac1n{ e}\).

Using the same approach as for the round robin tournament, we find the
variances of log-influence weights (i.e.~diagonal entries of the
covariance matrix) for the circular tournament are equal to
\((n^2 - 1)/(6kn)\). The next covariance terms (i.e.~the
super/sub-diagonal entries) are \((n-1)(n-5)/(6kn)\) followed by (in the
third band, assuming \(n\geq 3\)), \((n^2 - 12n + 23)/(6kn)\) and so on.
These are exactly equal to the asymptotic covariances for a
Bradley--Terry model fit to the same data. So efficiency holds for
circular as well as round robin tournaments.

\hypertarget{discussion}{%
\section{Discussion}\label{discussion}}

Just as the intuition behind PageRank has been discovered and
rediscovered over the years, the connection between eigenvector
centrality and the Bradley--Terry model has also been identified before.

Daniels (\protect\hyperlink{ref-daniels1969}{1969}) proposed a `fair
scores' model for paired comparisons, equivalent to the influence weight
metric of Pinski and Narin (\protect\hyperlink{ref-pinski1976}{1976})
and noted a relationship with Bradley--Terry model. Negahban, Oh, and
Shah (\protect\hyperlink{ref-negahban2017}{2017}) introduced a measure
called Rank Centrality, akin to influence weight for ratio matrices,
which corresponds to Bradley--Terry when every entity competes the same
number of times; their results are equivalent to those in
Section~\ref{sec-asymptotic}. Maystre and Grossglauser
(\protect\hyperlink{ref-maystre2015}{2015}) presented a metric called
Luce Spectral Ranking, equal to influence weight, and an iterative
algorithm for accurate estimation of Bradley--Terry maximum likelihood
estimates from a generalization of the corresponding stationary
distribution. More recently Prathap, Mukherjee, and Leydesdorff
(\protect\hyperlink{ref-prathap2020}{2020}) highlighted the invariance
of influence weight to self-citations (a property mentioned in the
original paper).

\hypertarget{self-citation}{%
\paragraph{Self citation}\label{self-citation}}

In directed graphs such as citation networks, entities have some control
over their out-degree (who they cite) and loops (self-citations),
notions not applicable to sports tournaments, where teams can neither
play against themselves nor choose their opponents.

The journal impact factor is notorious for its sensitivity to
self-citations and its resulting effect on researcher behaviour
(\protect\hyperlink{ref-wilhite2012}{Wilhite and Fong 2012}). Journal
PageRank (\protect\hyperlink{ref-bollen2006}{Bollen, Rodriquez, and Van
De Sompel 2006}) does not discount self-citations, Eigenfactor
(\protect\hyperlink{ref-west2010}{West 2010}) explicitly sets the
diagonal of \(C\) to zero, whilst SJR
(\protect\hyperlink{ref-gonzuxe1lez-pereira2010}{González-Pereira,
Guerrero-Bote, and Moya-Anegón 2010}) limits their number to a third of
a journal's total references. Meanwhile, the export scores model and
influence weight are implicitly invariant to self-citations, so such
adjustments are unnecessary.

\hypertarget{size-bias}{%
\paragraph{Size bias}\label{size-bias}}

Eigenfactor and Article Influence score involve a personalized
teleportation vector based on the number of articles in each journal,
adding an explicit journal size bias. PageRank more generally measures
the `total influence', so a particularly prolific journal including many
articles is likely to have more influence simply by having more things
in it to be cited, regardless of article quality. While `influence per
publication' metrics, including Article Influence score, attempts to
account for this by dividing by publication counts, this does not deal
with the `review journal problem' concerning very long articles
containing more `citable content', nor can it tackle `link farm'-esque
journals who issue many citations in the hope of them being
reciprocated. In social networks this might correspond to a user who
`follows' or `likes' an inordinate number of others. The invariant
method of Palacios-Huerta and Volij
(\protect\hyperlink{ref-palacios-huerta2002}{2002}) accounts for
`reference intensity' but this requires publication count data, which
may not be available, and limits its application to bibliometrics where
this concept exists.

Influence weight, or to use its longer name \emph{influence per outgoing
reference}, explicitly penalizes journals for giving out a lot of
citations and rewards them for being stingy, thus taking away the
`review journal advantage' and automatically discincentivizing `citation
cartels' or `mega journals'. This property motivates the nickname
\textbf{Scroogefactor} for an eigenvector-based citation metric that
disincentivizes such behaviour, tackling two weaknesses of impact factor
and Eigenfactor.

\hypertarget{statistical-analysis}{%
\paragraph{Statistical analysis}\label{statistical-analysis}}

Bibliometrics is sometimes defined as the `statistical analysis of
written publications', but methods are arguably not particularly
statistical without quantification of uncertainty and lack of fit. The
Scroogefactor, through its connection with the Bradley--Terry model,
offers these features, whereas other measures of influence do not.

\hypertarget{limitations-and-further-work}{%
\paragraph{Limitations and further
work}\label{limitations-and-further-work}}

In this paper, we have rather liberally used `PageRank' to refer to
eigenvector centrality even without any damping factor, because the
existence of teleportation breaks the quasi-symmetry property. The
question remains, to what extent this departure can be quantified and if
this might be propagated through derivation of analogous regularization
measures for Bradley--Terry models, such as those based on
`pseudocounts' that preserve quasi-symmetry. This is left for future
work.

\hypertarget{summary}{%
\paragraph{Summary}\label{summary}}

The proof presented here, motivated by quasi-symmetry, is as far as we
can tell, novel. Moreover, the recontextualization of influence weight
as a `scaled PageRank' motivates a different way of thinking about
measures of influence.

\hypertarget{references}{%
\section*{References}\label{references}}
\addcontentsline{toc}{section}{References}

\hypertarget{refs}{}
\begin{CSLReferences}{1}{0}
\leavevmode\vadjust pre{\hypertarget{ref-agresti2013}{}}%
Agresti, Alan. 2013. \emph{Categorical Data Analysis}. Third. Wiley
Series in Probability and Statistics. New York: John Wiley \& Sons.

\leavevmode\vadjust pre{\hypertarget{ref-avrachenkov2010}{}}%
Avrachenkov, Konstantin, Bruno Ribeiro, and Don Towsley. 2010.
{`Improving Random Walk Estimation Accuracy with Uniform Restarts'}. In,
edited by Ravi Kumar and Dandapani Sivakumar, 6516:98--109. Berlin,
Heidelberg: Springer Berlin Heidelberg.
\url{http://link.springer.com/10.1007/978-3-642-18009-5_10}.

\leavevmode\vadjust pre{\hypertarget{ref-bergstrom2007}{}}%
Bergstrom, Carl. 2007. {`Eigenfactor: Measuring the Value and Prestige
of Scholarly Journals'}. \emph{College \& Research Libraries News} 68
(5): 314--16. \url{https://doi.org/10.5860/crln.68.5.7804}.

\leavevmode\vadjust pre{\hypertarget{ref-bof2017}{}}%
Bof, Nicoletta, Giacomo Baggio, and Sandro Zampieri. 2017. {`On the Role
of Network Centrality in the Controllability of Complex Networks'}.
\emph{IEEE Transactions on Control of Network Systems} 4 (3): 643653.
\url{https://doi.org/10.1109/tcns.2016.2550862}.

\leavevmode\vadjust pre{\hypertarget{ref-bollen2006}{}}%
Bollen, Johan, Marko A. Rodriquez, and Herbert Van De Sompel. 2006.
{`Journal Status'}. \emph{Scientometrics} 69 (3): 669--87.
\url{https://doi.org/10.1007/s11192-006-0176-z}.

\leavevmode\vadjust pre{\hypertarget{ref-bradley1952}{}}%
Bradley, Ralph Allan, and Milton E. Terry. 1952. {`Rank Analysis of
Incomplete Block Designs: I. The Method of Paired Comparisons'}.
\emph{Biometrika} 39 (3/4): 324. \url{https://doi.org/10.2307/2334029}.

\leavevmode\vadjust pre{\hypertarget{ref-caussinus1965}{}}%
Caussinus, Henri. 1965. {`Contribution à l'analyse Statistique Des
Tableaux de Corrélation'}. \emph{Annales de La Faculté Des Sciences de
Toulouse Mathématiques} 29: 77--183.
\url{https://doi.org/10.5802/afst.519}.

\leavevmode\vadjust pre{\hypertarget{ref-daniels1969}{}}%
Daniels, H. E. 1969. {`Round-Robin Tournament Scores'}.
\emph{Biometrika} 56 (2): 295--99.
\url{https://doi.org/10.1093/biomet/56.2.295}.

\leavevmode\vadjust pre{\hypertarget{ref-dias2016}{}}%
Dias, Elisângela Silva, Diane Castonguay, and Mitre Costa Dourado. 2016.
{`Algorithms and Properties for Positive Symmetrizable Matrices'}.
\emph{TEMA Tend. Mat. Apl. Comput.} 17 (2): 187.

\leavevmode\vadjust pre{\hypertarget{ref-falagas2008}{}}%
Falagas, Matthew E., Vasilios D. Kouranos, Ricardo Arencibia-Jorge, and
Drosos E. Karageorgopoulos. 2008. {`Comparison of SCImago Journal Rank
Indicator with Journal Impact Factor'}. \emph{The FASEB Journal} 22 (8):
2623--28. \url{https://doi.org/10.1096/fj.08-107938}.

\leavevmode\vadjust pre{\hypertarget{ref-firth2004}{}}%
Firth, D. 2004. {`Quasi-Variances'}. \emph{Biometrika} 91 (1): 65--80.
\url{https://doi.org/10.1093/biomet/91.1.65}.

\leavevmode\vadjust pre{\hypertarget{ref-franceschet2011}{}}%
Franceschet, Massimo. 2011. {`PageRank: Standing on the Shoulders of
Giants'}. \emph{Communications of the ACM} 54 (6): 92--101.
\url{https://doi.org/10.1145/1953122.1953146}.

\leavevmode\vadjust pre{\hypertarget{ref-geller1978}{}}%
Geller, Nancy L. 1978. {`On the Citation Influence Methodology of Pinski
and Narin'}. \emph{Information Processing \& Management} 14 (2): 9395.

\leavevmode\vadjust pre{\hypertarget{ref-golub1986}{}}%
Golub, Gene H., and Carl D. Meyer, Jr. 1986. {`Using the QR
Factorization and Group Inversion to Compute, Differentiate, and
Estimate the Sensitivity of Stationary Probabilities for Markov
Chains'}. \emph{SIAM Journal on Algebraic Discrete Methods} 7 (2):
273--81. \url{https://doi.org/10.1137/0607031}.

\leavevmode\vadjust pre{\hypertarget{ref-gonzuxe1lez-pereira2010}{}}%
González-Pereira, Borja, Vicente P. Guerrero-Bote, and Félix
Moya-Anegón. 2010. {`A New Approach to the Metric of Journals{'}
Scientific Prestige: The SJR Indicator'}. \emph{Journal of Informetrics}
4 (3): 379--91. \url{https://doi.org/10.1016/j.joi.2010.03.002}.

\leavevmode\vadjust pre{\hypertarget{ref-katz1953}{}}%
Katz, Leo. 1953. {`A New Status Index Derived from Sociometric
Analysis'}. \emph{Psychometrika} 18 (1): 39--43.
\url{https://doi.org/10.1007/BF02289026}.

\leavevmode\vadjust pre{\hypertarget{ref-kelly1979}{}}%
Kelly, Frank P. 1979. \emph{Reversibility and Stochastic Networks}.
Wiley.

\leavevmode\vadjust pre{\hypertarget{ref-maystre2015}{}}%
Maystre, Lucas, and Matthias Grossglauser. 2015. {`Fast and Accurate
Inference of Plackett{\textendash}luce Models'}. In, 172180. NIPS'15.
Cambridge, MA, USA: MIT Press.

\leavevmode\vadjust pre{\hypertarget{ref-mccullagh1982}{}}%
McCullagh, Peter. 1982. {`Some Applications of Quasisymmetry'}.
\emph{Biometrika} 69 (2): 303308.
\url{https://doi.org/10.1093/biomet/69.2.303}.

\leavevmode\vadjust pre{\hypertarget{ref-meyer2000}{}}%
Meyer, Carl. 2000. {`Matrix Analysis and Applied Linear Algebra'},
January. \url{https://doi.org/10.1137/1.9780898719512}.

\leavevmode\vadjust pre{\hypertarget{ref-negahban2017}{}}%
Negahban, Sahand, Sewoong Oh, and Devavrat Shah. 2017. {`{R}ank
{C}entrality: Ranking from Pairwise Comparisons'}. \emph{Operations
Research} 65 (1): 266287.

\leavevmode\vadjust pre{\hypertarget{ref-newman2010}{}}%
Newman, Mark. 2010. \emph{Networks: An Introduction}. Oxford University
Press.

\leavevmode\vadjust pre{\hypertarget{ref-nielsen2015}{}}%
Nielsen, A. J. N., and M. Weber. 2015. {`Computing the Nearest
Reversible Markov Chain'}. \emph{Numerical Linear Algebra with
Applications} 22 (3): 483--99. \url{https://doi.org/10.1002/nla.1967}.

\leavevmode\vadjust pre{\hypertarget{ref-page1999}{}}%
Page, Lawrence, Sergey Brin, Rajeev Motwani, and Terry Winograd. 1999.
{`The PageRank Citation Ranking: Bringing Order to the Web.'} Technical
Report 1999-66. Stanford InfoLab; Stanford InfoLab.
\url{http://ilpubs.stanford.edu:8090/422/}.

\leavevmode\vadjust pre{\hypertarget{ref-palacios-huerta2002}{}}%
Palacios-Huerta, Ignacio, and Oscar Volij. 2002. {`The Measurement of
Intellectual Influence'}. \emph{SSRN Electronic Journal}.
\url{https://doi.org/10.2139/ssrn.329803}.

\leavevmode\vadjust pre{\hypertarget{ref-pinski1976}{}}%
Pinski, Gabriel, and Francis Narin. 1976. {`Citation Influence for
Journal Aggregates of Scientific Publications: Theory, with Application
to the Literature of Physics'}. \emph{Information Processing \&
Management} 12 (5): 297312.
\url{https://doi.org/10.1016/0306-4573(76)90048-0}.

\leavevmode\vadjust pre{\hypertarget{ref-prathap2020}{}}%
Prathap, Gangan, Somenath Mukherjee, and Loet Leydesdorff. 2020.
{`Within-Journal Self-Citations and the Pinski-Narin Influence
Weights'}. \emph{Journal of Informetrics} 14 (1): 100989.
\url{https://doi.org/10.1016/j.joi.2019.100989}.

\leavevmode\vadjust pre{\hypertarget{ref-selby2020}{}}%
Selby, David Antony. 2020. {`Statistical Modelling of Citation Networks,
Research Influence and Journal Prestige'}. PhD thesis, Coventry, UK:
University of Warwick. \url{http://wrap.warwick.ac.uk/158562/}.

\leavevmode\vadjust pre{\hypertarget{ref-sharp2000}{}}%
Sharp, WE, and Thomas Markham. 2000. {`Quasi-Symmetry and Reversible
Markov Sequences in Sedimentary Sections'}. \emph{Mathematical Geology}
32 (5): 561579.

\leavevmode\vadjust pre{\hypertarget{ref-stigler1994}{}}%
Stigler, Stephen M. 1994. {`Citation Patterns in the Journals of
Statistics and Probability'}. \emph{Statistical Science} 9 (1): 94--108.
\url{https://www.jstor.org/stable/2246292}.

\leavevmode\vadjust pre{\hypertarget{ref-tahata2022}{}}%
Tahata, Kouji. 2022. {`Advances in Quasi-Symmetry for Square Contingency
Tables'}. \emph{Symmetry} 14 (5): 1051.
\url{https://doi.org/10.3390/sym14051051}.

\leavevmode\vadjust pre{\hypertarget{ref-turner2012}{}}%
Turner, Heather, and David Firth. 2012. {`Bradley-Terry Models in
{\emph{R}} : The {\textbf{BradleyTerry2}} Package'}. \emph{Journal of
Statistical Software} 48 (9).
\url{https://doi.org/10.18637/jss.v048.i09}.

\leavevmode\vadjust pre{\hypertarget{ref-varin2016}{}}%
Varin, Cristiano, Manuela Cattelan, and David Firth. 2016. {`Statistical
Modelling of Citation Exchange Between Statistics Journals'}.
\emph{Journal of the Royal Statistical Society Series A: Statistics in
Society} 179 (1): 1--63. \url{https://doi.org/10.1111/rssa.12124}.

\leavevmode\vadjust pre{\hypertarget{ref-vigna2016}{}}%
Vigna, Sebastiano. 2016. {`Spectral Ranking'}. \emph{Network Science} 4
(4): 433--45. \url{https://doi.org/10.1017/nws.2016.21}.

\leavevmode\vadjust pre{\hypertarget{ref-west2010}{}}%
West, Jevin D. 2010. {`Eigenfactor: Ranking and Mapping Scientific
Knowledge'}. PhD thesis, University of Washington.

\leavevmode\vadjust pre{\hypertarget{ref-wilhite2012}{}}%
Wilhite, Allen W., and Eric A. Fong. 2012. {`Coercive Citation in
Academic Publishing'}. \emph{Science} 335 (6068): 542--43.
\url{https://doi.org/10.1126/science.1212540}.

\end{CSLReferences}

\end{document}